\pdfoutput=1

\documentclass[11pt]{article}
\usepackage[utf8]{inputenc}
\usepackage[english]{babel}
\usepackage{blindtext}
\usepackage{tcolorbox}
\usepackage{graphicx}
\usepackage{import}
\usepackage{amsmath}
\usepackage{amsthm}
\usepackage{amssymb}
\usepackage{bbm}
\usepackage{esint}
\usepackage{cancel}
\usepackage{stmaryrd}
\usepackage{asymptote}
\usepackage{todonotes}
\usepackage[colorlinks=true, citecolor=blue]{hyperref}
\usepackage[capitalize]{cleveref}

\newtheorem{theorem}{Theorem}[section]

\newtheorem{lemma}[theorem]{Lemma}

\newtheorem{example}[theorem]{Example}

\newtheorem{definition}[theorem]{Definition}
\newtheorem{remark}[theorem]{Remark}

\begin{document}
\title{The trace reconstruction problem for spider graphs}

\author{Alec Sun\thanks{Carnegie Mellon University, 
\texttt{alecsun@andrew.cmu.edu}} \and William Yue\thanks{Massachusetts Institute of Technology, 
\texttt{willyue@mit.edu}}}

\maketitle

\begin{abstract}
We study the trace reconstruction problem for spider graphs. Let $n$ be the number of nodes of a spider and $d$ be the length of each leg, and suppose that we are given independent traces of the spider from a deletion channel in which each non-root node is deleted with probability $q$. This is a natural generalization of the string trace reconstruction problem in theoretical computer science, which corresponds to the special case where the spider has one leg. In the regime where $d\ge \log_{1/q}(n)$, the problem can be reduced to the vanilla string trace reconstruction problem. We thus study the more interesting regime $d\le \log_{1/q}(n)$, in which entire legs of the spider are deleted with non-negligible probability. We describe an algorithm that reconstructs spiders with high probability using $\exp\left(\mathcal{O}\left(\frac{(nq^d)^{1/3}}{d^{1/3}}(\log n)^{2/3}\right)\right)$ traces. Our algorithm works for all deletion probabilities $q\in(0,1)$.

\textbf{Keywords:} Trace reconstruction, Graph algorithms,  Littlewood polynomials
\end{abstract}

\section{Introduction}
The \emph{string trace reconstruction problem}, first introduced in 1997 by Levenshtein \cite{levenshtein1997reconstruction}, is concerned with reconstructing an unknown \emph{seed} string using only noisy samples of the data. The unknown seed string is passed into some noisy channel multiple times, and the resulting error-prone copies are referred to as \emph{traces}. The goal is to use multiple traces to reconstruct the original seed string with high probability. Levenshtein solved the trace reconstruction problem for a \emph{substitution channel}, where each symbol of the seed string is mutated independently with constant probability. In 2004, Batu, Kannan, Khanna, and McGregor \cite{batu2004reconstructing} analyzed the problem for a \emph{deletion channel}, where symbols of the seed string are each deleted independently with constant probability. The string trace reconstruction problem has applications to computational biology, specifically in the new rapidly-evolving fields of DNA data storage and personalized immunogenics. For example, one might want to reconstruct the correct sequence of nucleotides of a DNA sequence from several traces, each of which has many deletion mutations.

It is critical to minimize the number of traces required to reconstruct the seed string with high probability. For example, in the application of DNA data storage, reducing the number of traces results in lower sequencing cost and time \cite{bhardwaj2020trace}. However, despite a wealth of recent work and attention on the deletion channel string trace reconstruction problem, for example \cite{chase2021new,de2017optimal,hartung2018trace,holden2020lower,holden2020subpolynomial,holenstein2008trace,mcgregor2014trace,nazarov2017trace,viswanathan2008improved}, the current best upper and lower bounds for the number of traces necessary to reconstruct the seed string with high probability remain at $\exp(\mathcal{O}(n^{1/5}))$ \cite{chase2020new} and $\tilde\Omega(n^{3/2})$  \cite{chase2021new,holden2020lower}, respectively, where $n$ is the length of the seed string. We remark that a lower bound of $\exp(\mathcal{O}(n^{1/3}))$ traces was shown for mean-based algorithms, which are algorithms that only use the empirical means of individual bits in the traces for reconstruction \cite{de2017optimal,nazarov2017trace}.

The exponential gap between upper and lower bounds for string trace reconstruction motivates studying variants of the problem for which one may be able to close the gap. Many variants have been recently proposed and studied, for example \cite{andoni2012global,chen2022near,cheraghchi2020coded,davies2021approximate,krishnamurthy2021trace,narayanan2021circular}. We focus on a variant known as the \emph{tree trace reconstruction problem} introduced by Davies, R\'acz, and Rashtchian \cite{davies2019reconstructing}. This is a generalization of the vanilla string trace reconstruction problem where the goal is to learn a node-labeled tree, rather than a single string, using traces from a suitably-defined deletion channel. The tree trace reconstruction problem may be directly applicable as well, as research on DNA nanotechnology has demonstrated that DNA molecule structures can be assembled into trees. Recent research has also shown how to distinguish different molecular topologies, such as spiders with three arms from line DNA, using nanopores \cite{karau2018capture}. 

Davies et al. \cite{davies2019reconstructing} studied the tree trace reconstruction problem for two special classes of trees: complete $k$-ary trees and spiders. This paper extends their work on spiders. An \emph{$(n,d)$-spider} consists of a single unlabeled root node with paths of $d$ labeled nodes attached to it. In total, there are $n$ labeled nodes. Consider a deletion channel, formally defined in \cref{deletion channel}, in which every node is independently deleted with probability $q$.

 When $d\ge \log_{1/q}(n)$, solving the spider trace reconstruction problem directly reduces to the string trace reconstruction problem \cite[Proposition 24]{davies2019reconstructing}. This is because in this regime, the legs of the spider are long enough for all of the legs to survive the deletion channel with high probability, so each leg can be considered independently as its own string trace reconstruction problem. Therefore, we assume that $d\le \log_{1/q}(n)$. In this more interesting regime, entire legs are deleted with non-negligible probability. Hence, if one looks at a single trace, it is unclear which of the legs in the seed spider the legs in the trace come from.

Davies et al. \cite{davies2019reconstructing} proved that for deletion probabilities $q<0.7$, there is some constant $C>0$ that depends only on $q$ such that $\exp(C\cdot d(nq^d)^{1/3})$ traces suffice to reconstruct an $(n,d)$-spider with probability $1-\mathcal{O}(1/n)$ (we refer to this as \emph{with high probability}). In this paper, we match this upper bound, up to polylogarithmic factors, but for the full range of deletion probabilities $q\in (0,1)$. Furthermore, while Davies et al. \cite{davies2019reconstructing} used a single variable generating function alongside harmonic analysis, we consider a bivariate generating function, which results in considerably simpler analysis. We use a best-match algorithm coupled with some results about bivariate Littlewood polynomials. We remark that Littlewood polynomials have also been used to analyze a different variant of trace reconstruction known as the \emph{matrix reconstruction problem} \cite{krishnamurthy2021trace}.

Our main result is the following theorem:

\begin{theorem}\label{main result}
Assume that $d\le \log_{1/q}(n)$. For any fixed deletion probability $q<1$, there exists some constant $C>0$ that depends only on $q$ such that
\[\exp\left(C\cdot \frac{(nq^d)^{1/3}}{d^{1/3}}(\log n)^{2/3}\right)\]
traces suffice to reconstruct an $(n,d)$-spider with high probability. 
\end{theorem}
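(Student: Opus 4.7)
The plan is to employ a mean-based best-match algorithm: for each candidate $(n,d)$-spider $S'$, compute empirical statistics of the traces, and output the candidate whose predicted statistics are closest to the empirical ones. The first step is to design a family of statistics, parameterized by $(z,w) \in \mathbb{C}^2$, whose expected value $\mu(S;z,w)$ is a bivariate polynomial in $(z,w)$ depending on the spider $S$. A natural choice is
\[
\mu(S;z,w) := \mathbb{E}\Bigl[\sum_{\ell} w^{|\ell|} \sum_{i=1}^{|\ell|} z^i\, b_i(\ell)\Bigr],
\]
where the outer sum is over surviving legs $\ell$ of a random trace, $|\ell|$ is the post-deletion length (so $w$ tracks leg length), and $b_i(\ell)$ is the $i$-th bit along the leg. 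Since spiders are naturally unordered multisets of legs, a symmetric-in-legs statistic is appropriate. Unfolding the deletion distribution, $\mu(S;z,w)$ becomes a bivariate polynomial whose coefficients are linear in the bits of $S$ weighted by explicit deletion probabilities.

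For two distinct spiders $S \neq S'$, the difference $\mu(S;z,w) - \mu(S';z,w)$ is a nonzero bivariate polynomial that, after a change of variables absorbing the $q$-dependent weights, reduces to a bivariate \emph{Littlewood} polynomial $L(z,w)$ of controlled bidegree with coefficients determined by the disagreement between the two leg-multisets. By Hoeffding's inequality, reliably distinguishing $S$ from $S'$ via empirical estimates of $\mu(\cdot; z,w)$ requires $O\bigl(|L(z,w)|^{-2} \log n\bigr)$ traces for the best $(z,w)$. A union bound over the $2^n$ candidate pairs then reduces the theorem to lower-bounding $\sup_{(z,w) \in \Omega} |L(z,w)|$ over an appropriate two-dimensional domain $\Omega$, which one should take to be a product of arcs of the unit circle whose sizes are tuned to $q$ and $d$ in the manner familiar from the string-trace analyses of Nazarov--Peres and De--O'Donnell--Servedio.

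The main obstacle is precisely this bivariate Borwein--Erd\'elyi-type bound: any nonzero bivariate Littlewood polynomial of the relevant bidegree must satisfy $\sup_{(z,w)\in\Omega}|L(z,w)| \ge \exp\bigl(-C (nq^d)^{1/3} d^{-1/3} (\log n)^{2/3}\bigr)$. Intuitively, the first variable should contribute the univariate exponent $d^{1/3}$ familiar from string trace reconstruction, while the second variable picks up an additional factor $(kq^d)^{1/3}$ reflecting the $\sim kq^d$ legs that are entirely absent in a typical trace; combining them then yields the exponent $(kq^d)^{1/3} = (nq^d)^{1/3}/d^{1/3}$. Executing the two-variable Tur\'an-type extremal argument --- in particular obtaining the precise $(\log n)^{2/3}$ correction from the bivariate interaction rather than naively tensoring two univariate bounds --- is the technically delicate step, and this is where the improvement by a factor of $d$ (up to polylogs) over the univariate single-generating-function analysis of Davies--R\'acz--Rashtchian is expected to arise.
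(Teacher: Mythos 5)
Your high-level plan---a mean-based best-match algorithm driven by a bivariate generating function whose difference across two candidates is lower-bounded by a bivariate Littlewood-type extremal estimate on arcs of the torus---is exactly the paper's strategy. But there are two genuine gaps. First, your choice of statistic does not work: $\mu(S;z,w)=\mathbb{E}\bigl[\sum_{\ell}w^{|\ell|}\sum_i z^i b_i(\ell)\bigr]$ is invariant under permuting the legs of the seed, because each leg passes through the channel independently and you sum symmetrically over surviving legs. In this model the legs are ordered (the channel preserves the left-to-right order of the surviving legs), so two spiders whose legs are permutations of one another are distinct and must be distinguished, yet for such a pair $\mu(S;\cdot,\cdot)-\mu(S';\cdot,\cdot)\equiv 0$ and your ``nonzero bivariate polynomial'' vanishes identically. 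Moreover, even for pairs your statistic does separate, the coefficients of the difference polynomial are sums over legs, so they need not lie in $\{-1,0,1\}$ and the reduction to a Littlewood polynomial fails. The correct second variable is the leg \emph{index}: a whole leg survives with probability $1-q^d$, so the ordered sequence of legs behaves like a string under a deletion channel with deletion probability $q^d$, giving $\mathbb{E}\bigl[\sum_{i',j'}b_{i',j'}w_1^{i'}w_2^{j'}\bigr]=(1-q)\sum_{i,j}a_{i,j}(q^d+(1-q^d)w_1)^i(q+(1-q)w_2)^j$; the substitution $z_1=q^d+(1-q^d)w_1$, $z_2=q+(1-q)w_2$ then genuinely yields a Littlewood polynomial because $a^{(1)}_{i,j}-a^{(2)}_{i,j}\in\{-1,0,1\}$.

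Second, you leave the bivariate Littlewood lower bound---which you correctly identify as the crux---unproved, and your heuristic for it is off: no $d^{1/3}$-type contribution is extracted from the within-leg variable. The paper shows that a nonzero bivariate Littlewood polynomial of bidegree $(a,b)$ satisfies $|f|\ge\exp(-cL_1L_2\log(ab))$ at some point $(e^{i\theta_1},e^{i\theta_2})$ with $|\theta_1|\le\pi/L_1$ and $|\theta_2|\le\pi/L_2$, by applying the maximum modulus principle to the product $\prod_{x,y}f(z_1e^{2\pi ix/L_1},z_2e^{2\pi iy/L_2})$, and then takes $L_2=1$: since $d\le\log_{1/q}(n)$, the cost $|w_2^*|^d\le\exp(C'd)$ of using the whole circle in the within-leg variable is negligible, and all the savings come from restricting the leg-index variable to an arc of width $\pi/L$ with $L=(nq^d/(d\log n))^{1/3}$, which balances the conversion cost $\exp(nq^d/(dL^2))$ against the Littlewood loss $\exp(L\log n)$. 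Without this lemma, and without a statistic to which it legitimately applies, the proposal does not establish the theorem.
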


Note that the upper bound in \cref{main result} matches the upper bound $\exp(C\cdot d(nq^d)^{1/3})$ in \cite{davies2019reconstructing} up to polylogarithmic factors and works for all deletion probabilities $q\in (0, 1)$, not just $q<0.7$. Furthermore, \cref{main result} strictly improves upon the upper bound $\exp(C\cdot d(nq^d)^{1/3})$ for all $q\in (0,1)$ when $d = \omega(\sqrt{\log n})$.

\subsection{Acknowledgements} The authors would like to thank Shyam Narayanan for suggesting the problem.

\section{Preliminaries}

\subsection{Rooted spiders}

In this section, we define the objects to be reconstructed: rooted binary-labeled spiders $X$, as well as an indexing system for their nodes.

\begin{definition}\label{rooted spiders}
    Let $n$ and $d$ be positive integers, and for convenience assume that $d\mid n$. An \emph{$(n,d)$-spider} $X$ consists of a single unlabeled root node with $\frac{n}{d}$ paths of $d$ nodes with binary labels from $\{0,1\}$ emanating from it, so there are $n$ labeled nodes in total. We refer to these paths as the \emph{legs} of the spider.
\end{definition}

\begin{figure}
    \centering
    \includegraphics[width=7cm]{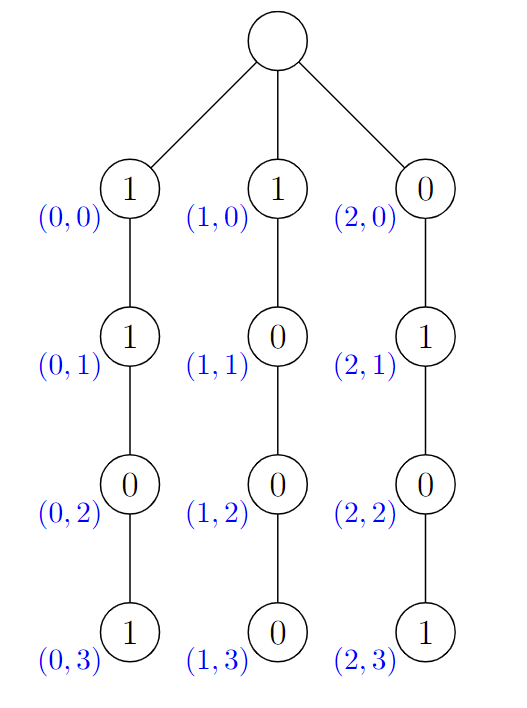}
    \caption{A binary-labeled $(12,4)$ spider, with indexing system drawn in blue to the bottom left of each vertex.}
    \label{fig:spider}
\end{figure}

An example of a binary-labeled $(12,4)$-spider is shown in \cref{fig:spider}. We index each node using two coordinates, where the first coordinate denotes the leg of the spider the node is on, and the second its depth down that leg. This is in contrast to the depth-first-search labelling in \cite{davies2019reconstructing}. In general, we may denote by $a_{i,j}\in\{0,1\}$ the label of the $(i,j)$-node and the set of all labels of $X$ as $a=\{a_{i,j}\}_{0\le i<\frac{n}{d}, 0\le j<d}$. For convenience, we define the set $S:=\{(i,j)\mid 0\le i<\frac{n}{d},0\le j<d\}$, so we can write the labels of $X$ as $a=\{a_{i,j}\}_{(i,j)\in S}$.

\subsection{Deletion channel for spiders}\label{deletion channel}

In the deletion channel for spiders, we start by independently selecting each non-root node for deletion with probability $q$. Note that we assume the root node is never deleted, as deleting the root node would disconnect the graph. When nodes are deleted, all nodes below it shift upward. If all the nodes in a leg are deleted, the entire leg disappears. If a leg disappears, the remaining legs retain the same left-to-right structure, but it is no longer clear from looking at a trace which leg in the trace corresponds to which leg in the seed.

\begin{remark}
For trees that are not spiders, one must be more careful with describing the deletion channel. Davies et al. \cite{davies2019reconstructing} studied two models, the Tree-Edit-Distance (TED) model and the Left-Propagation Model. However, in the case of spiders, both models equivalent to the deletion channel described above.
\end{remark}

For convenience in our analysis, after the deletion process we append nodes labeled $0$ to the end of each shortened leg until they are of length $d$ again. Also, if any complete legs were deleted, we add a leg of length $d$ with all nodes labeled $0$ to the right of the remaining legs. This pads the trace with $0$'s to form an $(n,d)$-spider. We refer to the resulting spider as a \emph{trace}. We remark that this padding process may cause two originally different traces to end up becoming identical. An example of the deletion and padding process is shown in \cref{fig:deletion}.

\begin{figure}
    \centering
    \includegraphics[width=9cm]{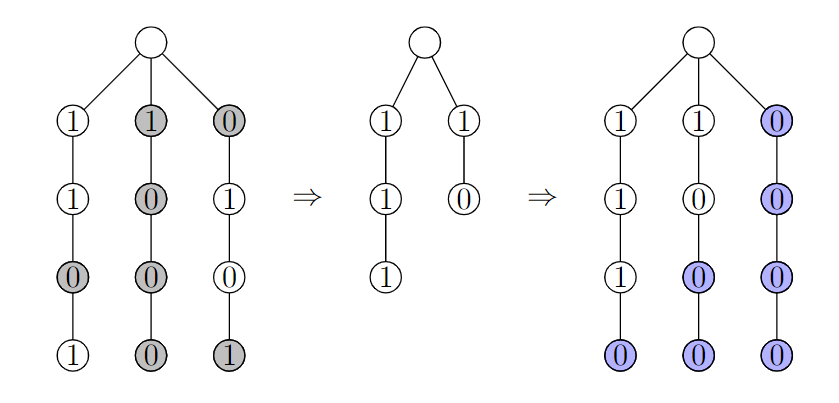}
    \caption{An example of the deletion channel applied on a $(12,4)$-spider. The deleted nodes are colored gray. Then we pad nodes labeled $0$, colored blue, to form a $(12,4)$ spider.}
    \label{fig:deletion}
\end{figure}

\subsection{Generating function for traces of spiders}

Though the deletion channel for trees is more complicated than that for strings, it turns out that one can still describe the deletion process explicitly using generating functions. These generating functions will then be used to distinguish between candidate spiders.

We begin by defining generating functions which encode the information of the possible traces of a spider:

\begin{definition}
    Let $a=\{a_{i,j}\}_{(i,j)\in S}$ denote the labels of an $(n,d)$-spider where $a_{i,j}\in\mathbb{R}$, and let the random variable $b=\{b_{i',j'}\}_{(i',j')\in S}$ denote the labels of its trace from the deletion channel with deletion probability $q$. Define a generating function $$\sum_{(i',j')\in S}b_{i',j'}w_1^{i'}w_2^{j'}$$ for each possible labeling $b=\{b_{i',j'}\}_{(i',j')\in S}$ of a trace.
\end{definition}

One of our key observations is that for each $(n,d)$-spider, we can derive a closed-form formula for the expected value of the generating function of a trace:

\begin{lemma}\label{genfunc}
Let $a=\{a_{i,j}\}_{(i,j)\in S}$ denote the labels of an $(n,d)$-spider where $a_{i,j}\in\mathbb{R}$, and let the random variable $b=\{b_{i',j'}\}_{(i',j')\in S}$ denote the labels of its trace from the deletion channel with deletion probability $q$. Define $$A_a(w_1,w_2) :=\mathbb{E}\left[\sum_{(i',j')\in S}b_{i',j'}w_1^{i'}w_2^{j'}\right]$$ to be the expected value of the generating function of a trace, where the expectation is taken over the randomness of the deletion process. Then $$A_a(w_1,w_2) = (1-q)\sum_{(i,j)\in S}a_{i,j}(q^d+(1-q^d)w_1)^i (q+(1-q)w_2)^j$$
for all $w_1,w_2\in\mathbb{C}$.
\end{lemma}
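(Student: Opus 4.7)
My plan is to prove the formula by linearity of expectation, writing the sum as a contribution from each original node $(i,j)$ and then computing each contribution via the independence structure of the deletion channel.

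First, I would expand
\[
A_a(w_1,w_2) = \sum_{(i,j)\in S} a_{i,j}\,\mathbb{E}\!\left[\mathbf{1}_{\{(i,j)\text{ survives}\}}\cdot w_1^{I'(i,j)}\,w_2^{J'(i,j)}\right],
\]
where $(I'(i,j), J'(i,j))$ denotes the trace coordinates of the original node $(i,j)$, conditional on it surviving. Since padded zero-labeled nodes contribute nothing to the generating function, only surviving nodes matter, and each surviving node appears exactly once in the outer sum over $(i',j') \in S$.

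Next, fixing $(i,j)$, I would condition on $(i,j)$ surviving, which happens with probability $1-q$. Given survival, the new depth coordinate is $J'(i,j) = j - Y$, where $Y$ is the number of deleted nodes among $(i,0),\dots,(i,j-1)$; since deletions are independent, $Y \sim \mathrm{Binomial}(j,q)$ independently of everything off leg $i$. The new leg coordinate is $I'(i,j) = i - X$, where $X$ counts the fully deleted legs among $0,1,\dots,i-1$; a given leg $k<i$ is fully deleted with probability $q^d$, independently across legs and independently of leg $i$'s deletions. So $X \sim \mathrm{Binomial}(i,q^d)$ is independent of $Y$.

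I would then factor the expectation. Writing $j - Y = \sum_{k<j}(1-Y_k)$ with each $1-Y_k$ equal to $0$ with probability $q$ and $1$ with probability $1-q$, independence gives
\[
\mathbb{E}\!\left[w_2^{\,j-Y}\right] = \bigl(q + (1-q)w_2\bigr)^j,
\]
and similarly $\mathbb{E}[w_1^{\,i-X}] = (q^d + (1-q^d)w_1)^i$. Multiplying by the survival probability $1-q$, summing over $(i,j)\in S$, and invoking linearity of expectation yields the claimed closed form, which extends from $w_1,w_2 \in \mathbb{R}$ to all of $\mathbb{C}$ by polynomial identity.

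The only delicate step is the factorization in the third paragraph: I must argue that, conditional on the survival of $(i,j)$, the two binomial variables $X$ and $Y$ really are independent and have the claimed marginals. This is clear because $X$ is a function of deletions on legs strictly to the left of $i$, $Y$ is a function of deletions at positions strictly above $(i,j)$ on leg $i$, and these involve disjoint sets of nodes from $(i,j)$ itself, so conditioning on the survival of $(i,j)$ affects neither. Everything else is a bookkeeping exercise.
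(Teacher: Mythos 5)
Your proposal is correct and is essentially the paper's argument: both decompose the expected generating function by the contribution of each original node $(i,j)$, use the survival probability $1-q$, and exploit that the leg index drops by a $\mathrm{Binomial}(i,q^d)$ amount and the depth by an independent $\mathrm{Binomial}(j,q)$ amount. The only cosmetic difference is that you evaluate $\mathbb{E}[w^{\,j-Y}]$ directly as a product over independent Bernoulli factors, whereas the paper writes out the explicit point probabilities $\binom{j}{j'}(1-q)^{j'}q^{j-j'}$ and then resums them with the binomial theorem.
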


\begin{proof}
Note that the coordinates of a specific node can only decrease after the deletion process. We compute the probability that the label $b_{i',j'}$ comes from the label $a_{i,j}$, where $i\ge i'$ and $j\ge j'$. This occurs when:
\begin{itemize}
    \item $a_{i,j}$ is preserved, which occurs with probability $1-q$,
    \item Exactly $i'$ of the first $i$ paths are retained, which occurs with probability 
    \[\binom{i}{i'}(1-q^d)^{i'}q^{d(i-i')}.\]
    \item Exactly $j'$ of the first $j$ nodes in the path of the node with in $X$ with index $(i,j)$ are retained, which occurs with probability 
    \[\binom{j}{j'}(1-q)^{j'}q^{j-j'}.\]
\end{itemize}

Thus the probability that the label $b_{i',j'}$ comes from the label $a_{i,j}$ is \[(1-q) \binom{i}{i'}(1-q^d)^{i'}q^{d(i-i')} \binom{j}{j'}(1-q)^{j'}q^{j-j'}.\]
We conclude that
\begin{align*}
    \mathbb{E}&\left[\sum_{(i',j')\in S}b_{i',j'}w_1^{i'}w_2^{j'}\right]\\
    &=(1-q)\sum_{(i',j')\in S}w_1^{i'}w_2^{j'}\sum_{(i,j)\in S}a_{i,j}\binom{i}{i'}(1-q^d)^{i'}q^{d(i-i')}\binom{j}{j'}(1-q)^{j'}q^{j-j'}\\
    &=(1-q)\sum_{i=0}^{\frac{n}{d}-1}\sum_{j=0}^{d-1}a_{i,j}\sum_{i'=0}^{i}\sum_{j'=0}^j\binom{i}{i'}(1-q^d)^{i'}q^{d(i-i')}w_1^{i'}\binom{j}{j'}(1-q)^{j'}q^{j-j'}w_2^{j'}\\
    &=(1-q)\sum_{i=0}^{\frac{n}{d}-1}\sum_{j=0}^{d-1}a_{i,j}(q^d+(1-q^d)w_1)^{i}(q+(1-q)w_2)^{j}\\
    &=(1-q)\sum_{(i,j)\in S}a_{i,j}(q^d+(1-q^d)w_1)^{i}(q+(1-q)w_2)^{j},
\end{align*}
where we change the order of summation in the second equality and apply the binomial theorem in the third equality.
\end{proof}

\begin{figure}
    \centering
    \includegraphics[width=11cm]{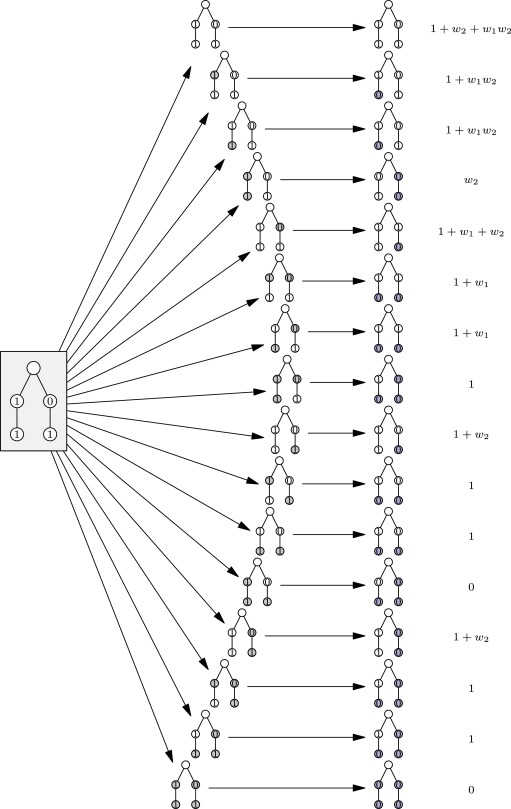}
    \vspace{-3pt}
    \caption{All possible padded traces for a certain seed $(4,2)$-spider after being passed through the deletion channel, with their associated generating functions.}
    \label{fig:expectedvalue}
\end{figure}

\begin{example}
\cref{fig:expectedvalue} depicts all $2^4 = 16$ possible deletions that could occur for a specific $(4,2)$-spider with labels $a_{0,0}=1$, $a_{1,0}=0$, $a_{0,1}=1$, and $a_{1,1}=1$, shown on the right. The figure also depicts the resulting padded traces and their associated generating functions. Note that $A_a(w_1,w_2)$, which recall is the expected value of the generating functions of the padded traces, is a weighted average of all the generating functions on the right. For example, if $q=\frac{1}{2}$, we can simply average all the values in \cref{fig:expectedvalue} to get
\begin{equation}\label{l1-example}
    \begin{split}
        A_a(w_1,w_2) &= \mathbb{E}\left[\sum_{0\le i'<2, 0\le j'<2}b_{i',j'}w_1^{i'}w_2^{j'}\right]
        \\&=\frac{13}{16}+\frac{3}{16}w_1+\frac{5}{16}w_2+\frac{3}{16}w_1w_2
    \end{split}
\end{equation}
Note that \cref{l1-example} equals what we expect from \cref{genfunc}:
\begin{align*}
    A_a(w_1,w_2) &= (1-q)\sum_{
    (i,j)\in S}a_{i,j}(q^d+(1-q^d)w_1)^i(q+(1-q)w_2)^j\\
    &=\frac{1}{2}\cdot \sum_{0\le i<2, 0\le j<2}a_{i,j}\left(\frac{1}{4}+\frac{3}{4}w_1\right)^i\left(\frac{1}{2}+\frac{1}{2}w_2\right)^j\\
    &=\frac{1}{2}\left[1+\left(\frac{1}{2}+\frac{1}{2}w_2\right)+\left(\frac{1}{4}+\frac{3}{4}w_1\right)\left(\frac{1}{2}+\frac{1}{2}w_2\right)\right]\\
    &=\frac{13}{16}+\frac{3}{16}w_1+\frac{5}{16}w_2+\frac{3}{16}w_1w_2.
\end{align*}
\end{example}

\section{Proof of main result}

In this section we prove \cref{main result}. Like in previous work on string trace reconstruction, we use a best-match algorithm to reconstruct the spider. As is typical in best-match algorithms, we compare every pair $(X^{(1)}, X^{(2)})$ of candidate spiders to see which spider from the pair is more likely to have produced the observed traces. We repeat this process for each pair of candidates and use the results to select a best possible guess for the original seed spider.

\subsection{Overview of the algorithm}

We consider all $2^n$ possible candidate spiders $X$ and select a pair of spiders to compare against each other. Suppose we select candidate spiders $X^{(1)}$ and $X^{(2)}$ with labels $a^{(1)}=\{a_{i,j}^{(1)}\}_{(i,j)\in S}$ and $a^{(2)}=\{a_{i,j}^{(2)}\}_{(i,j)\in S}$, respectively. Now, consider the element-wise difference $a=a^{(1)}-a^{(2)}$, which is nonzero since $X^{(1)}$ and $X^{(2)}$ are distinct. Let $Y^{(1)}$ and $Y^{(2)}$ denote the random traces with labels $b^{(1)}=\{b_{i',j'}^{(1)}\}_{(i',j')\in S}$ and $b^{(2)}=\{b_{i',j'}^{(2)}\}_{(i',j')\in S}$, which result from passing $X^{(1)}$ and $X^{(2)}$ respectively through the deletion channel. Now, we compute the difference of the generating functions corresponding to $X^{(1)}$ and $X^{(2)}$, which is equivalent to plugging $a$ into the expression in \cref{genfunc}:
\begin{equation}\label{diffspiders}
    \begin{split}
        &\sum_{(i',j')\in S}(\mathbb{E}[b_{i',j'}^{(1)}]-\mathbb{E}[b_{i',j'}^{(2)}])\cdot w_1^{i'}w_2^{j'}
        \\&=(1-q)\sum_{(i,j)\in S}a_{i,j}(q^d+(1-q^d)w_1)^i(q+(1-q)w_2)^j.
    \end{split}
\end{equation}
Through a process described in \cref{completing proof}, we can select some pair of indices $(I,J)\in S$, depending on $X^{(1)}$ and $X^{(2)}$, such that $|\mathbb{E}[b_{I,J}^{(1)}]-\mathbb{E}[b_{I,J}^{(2)}]|$ is lower bounded substantially. What this means is that the expected value of some label $b_{I,J}$ in the trace differs significantly depending on whether or not the seed spider was $X^{(1)}$ or $X^{(2)}$. We can use this information in combination with the empirical expected value $\mathbb{E}[b_{I,J}]$ among our observed traces to select the better match between $X^{(1)}$ and $X^{(2)}$, that is, which of $X^{(1)}$ or $X^{(2)}$ is more likely to have produced the empirical expected value $\mathbb{E}[b_{I,J}]$. Such a process is known as a \emph{mean-based algorithm}.

We repeat the above comparison for all pairs of spiders and then output the spider $X^*$ that loses against no other spiders, if such a spider exists. If no such spider exists, we can output a uniformly random spider. As the true seed spider is among the $2^n$ candidate spiders, we can use a Chernoff bound to upper bound the probability that it loses against any other candidate spider by $\mathcal{O}\left(\frac{1}{n}\right)$. Therefore, so long as we are given enough traces, the true seed spider is outputted by the algorithm with high probability.

\subsection{Littlewood polynomials}

To analyze the expression in \cref{diffspiders}, we use bivariate Littlewood polynomials from complex analysis. We begin by defining these polynomials:

\begin{definition}
    A two-variable polynomial $A(z_1,z_2)$ is called a \emph{bivariate Littlewood polynomial}
    if all of its coefficients are in the set $\{-1,0,1\}$.
\end{definition}

Note that in the right hand side of \cref{diffspiders}, the coefficients satisfy $a_{i,j}=a_{i,j}^{(1)}-a_{i,j}^{(2)}\in\{-1,0,1\}$. If we write the right hand side of \cref{diffspiders} in terms of new variables $z_1=q^d+(1-q^d)w_1$ and $z_2=q+(1-q)w_2$, then we get
\[(1-q)\sum_{(i,j)\in S}a_{i,j}z_1^iz_2^j,\]
which is $(1-q)$ times a nonzero bivariate Littlewood polynomial with $(z_1)$-degree less than $\frac{n}{d}$ and $(z_2)$-degree less than $d$. In order to lower bound this polynomial for some choice of $z_1$ and $z_2$, we prove the following lemma concerning bivariate Littlewood polynomials:

\begin{lemma}\label{littlewood}
    Let $f(z_1,z_2)$ be a nonzero bivariate Littlewood polynomial with degree $a$ in $z_1$ and degree $b$ in $z_2$. Then
    \[|f(z_1^*,z_2^*)|\ge\exp\left(-cL_1L_2\log (ab)\right)\]
    for some $z_1^*=\exp(i\theta_1)$ and $z_2^*=\exp(i\theta_2)$, where $\theta_1$ and $\theta_2$ lie in the ranges $|\theta_1|\le \frac{\pi}{L_1}$ and $|\theta_2|\le \frac{\pi}{L_2}$.
\end{lemma}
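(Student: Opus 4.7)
The plan is to reduce the bivariate statement to the univariate Littlewood polynomial lemma (the Borwein--Erd\'elyi-type bound used in \cite{davies2019reconstructing}) by iterating in one variable at a time. First, I would write
$$f(z_1,z_2)=\sum_{j=0}^{b} g_j(z_1)\,z_2^j,$$
where each $g_j(z_1)$ is a univariate Littlewood polynomial of $z_1$-degree at most $a$. Since $f\not\equiv 0$, there exists some index $j_0$ with $g_{j_0}\not\equiv 0$. Applying the univariate Littlewood lemma to $g_{j_0}$ on the arc $\{e^{i\theta_1}:|\theta_1|\le\pi/L_1\}$ yields a point $z_1^*=e^{i\theta_1^*}$ with $|\theta_1^*|\le\pi/L_1$ and
$$|g_{j_0}(z_1^*)|\ge\exp(-c_1L_1\log a)$$
for some absolute constant $c_1>0$.

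Next, I would fix this $z_1^*$ and consider the univariate polynomial $h(z_2):=f(z_1^*,z_2)$, which has $z_2$-degree at most $b$. Its coefficients $g_j(z_1^*)$ are complex numbers, each bounded by $a+1$ in modulus via the triangle inequality, while the $j_0$-th coefficient has modulus at least $\exp(-c_1L_1\log a)$. Applying an appropriate scaled analogue of the univariate Littlewood lemma to $h/(a+1)$ --- that is, a Turán/Borwein--Erd\'elyi bound for polynomials whose coefficients have modulus at most $1$ and at least one coefficient of prescribed lower-bounded modulus --- would produce $z_2^*=e^{i\theta_2^*}$ with $|\theta_2^*|\le\pi/L_2$ and
$$|h(z_2^*)|\ge \frac{\exp(-c_1L_1\log a)}{a+1}\cdot\exp(-c_2L_2\log b).$$
Combining the two factors, absorbing the $(a+1)^{-1}$ into the exponent, and using $L_1,L_2\ge 1$, one obtains
$$|f(z_1^*,z_2^*)|\ge\exp\bigl(-C(L_1\log a+L_2\log b)\bigr)\ge\exp(-CL_1L_2\log(ab)),$$
which is the claimed inequality (and in fact a somewhat stronger one).

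The main obstacle I anticipate is justifying the second step: the scaled polynomial $h/(a+1)$ is no longer a Littlewood polynomial but only a polynomial with complex coefficients of modulus at most $1$, and I need a lower bound on $\sup|h|$ over the small arc that depends only on the guaranteed lower bound $\exp(-c_1L_1\log a)/(a+1)$ for one of its coefficients. This generalization is implicit in the standard Chebyshev-polynomial proof of the univariate Littlewood lemma --- that argument only uses an upper bound on the coefficients together with one coefficient of controlled lower bound --- but it would need to be stated and verified explicitly, possibly by invoking a form of Turán's lemma on short arcs of the unit circle.
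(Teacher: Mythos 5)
Your route is genuinely different from the paper's: the paper forms a single bivariate product $F(z_1,z_2)=\prod_{x,y}f(z_1e^{2\pi ix/L_1},z_2e^{2\pi iy/L_2})$, anchors the maximum-modulus argument on a ``corner'' coefficient of $F$ that is a product of $\pm 1$'s (hence of modulus exactly $1$), and only then rotates into the small bi-arc. You instead iterate one variable at a time. The iteration can be made to work, but the obstacle you flag at the second step is a real gap, and your proposed dismissal of it is not accurate. The standard Chebyshev/product proof of the univariate lemma does \emph{not} merely use ``an upper bound on the coefficients together with one coefficient of controlled lower bound'': it anchors specifically on the \emph{lowest-order nonzero} coefficient (the value $\tilde p(0)$ after factoring out the power of $z$), since that is what survives the maximum modulus principle. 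In your $h(z_2)=f(z_1^*,z_2)=\sum_j g_j(z_1^*)z_2^j$, the lowest-order nonzero coefficient is $g_k(z_1^*)$ for the minimal $k$ with $g_k(z_1^*)\neq 0$, and nothing prevents this from being astronomically small even though $|g_{j_0}(z_1^*)|$ is large; the product argument then yields only a bound in terms of that tiny anchor. A Tur\'an-type bound on short arcs normalized by an \emph{arbitrary} middle coefficient is a different (and quantitatively delicate) statement that you would have to prove, not a scaling of the standard lemma.

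The good news is that the gap closes with one extra choice: take $j_0$ to be the \emph{minimal} index with $g_{j_0}\not\equiv 0$. Then $g_j\equiv 0$ for all $j<j_0$, so the lowest-order nonzero coefficient of $h$ is exactly $g_{j_0}(z_1^*)$, which step one has bounded below by $\delta_1=\exp(-cL_1\log a)$. Running the product/maximum-modulus argument in $z_2$ with this anchor gives $|\tilde H(0)|=\delta_1^{L_2}$ and hence $|f(z_1^*,z_2^*)|\ge \delta_1^{L_2}\big((a+1)(b+1)\big)^{-(L_2-1)}\ge\exp\left(-cL_1L_2\log(ab)\right)$, which is the lemma. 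Note, however, that the anchor gets raised to the power $L_2$ in this argument, so you recover the product form $L_1L_2$ in the exponent; your claimed ``somewhat stronger'' additive bound $\exp\left(-C(L_1\log a+L_2\log b)\right)$ is not delivered by this method and should be dropped. With that fix and that retraction, your proof is a valid alternative to the paper's one-shot bivariate argument; the paper's version avoids the anchoring issue entirely because the relevant coefficient of its product $F$ is automatically a product of $\pm 1$'s.
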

\begin{proof}
    Define the 2-variable polynomial
    \[F(z_1,z_2)=\prod_{\substack{1\le x\le L_1 \\ 1\le y\le L_2}}f\left(z_1e^{2\pi ix/L_1},z_2e^{2\pi iy/L_2}\right).\]
    Using the maximum modulus principle, which recall says that the modulus $|F|$ of any holomorphic function $F$ achieves its maximum value at the boundary of its domain, we first show that we can find some $z_1'$ and $z_2'$ on the unit circle 
    such that $|F(z_1',z_2')|\ge 1$. Note that restricting the domain of a holomorphic function to the unit disk leaves the function holomorphic.
    
    Factor $F(z_1,z_2)=z_2^k\cdot G(z_1,z_2)$ so that $G(z_1,z_2)$ no common factors with $z_2$. 
    Since $F$ has nonzero coefficients, 
    $G(z_1,0)$ can be viewed as a nonzero polynomial in one variable $z_1$. 
    We can now factor $G(z_1,0)=z_1^\ell \cdot H(z_1)$ so that $H(z_1)$ is nonzero and hence satisfies $|H(0)|= 1$. 
    By the maximum modulus principle, we can find some $z_1'$ on the unit circle such that $|H(z_1')|\ge |H(0)|= 1$. 
    We can apply the maximum modulus principle again to find some $z_2'$ on the unit circle such that $|G(z_1',z_2')|\ge |G(z_1',0)|$. 
    Therefore, we can find $z_1'$ and $z_2'$ such that
    \[|F(z_1',z_2')|=|G(z_1',z_2')|\ge |G(z_1',0)|=|H(z_1')|\ge |H(0)|= 1.\]
    Now, applying the definition of $F$ gives
    \[1\le |F(z_1',z_2')|\le |f(z_1'e^{2\pi ix/L_1},z_2'e^{2\pi iy/L_2})|\cdot (ab)^{L_1L_2-1}\]
    for all $1\le x\le L_1$ and $1\le y\le L_2$, where we use the fact that $|f(z_1,z_2)|\le ab$ for $|z_1|=|z_2|=1$. 
    We can now choose appropriate $x$ and $y$ to rotate $z_1'$ and $z_2'$ along the unit circle in the complex plane so that $z_1^*=z_1'\cdot e^{2\pi ix/L_1}=\exp(i\theta_1)$ 
    and $z_2^*=z_2'\cdot e^{2\pi iy/L_2}=\exp(i\theta_2)$
    satisfy $|\theta_1|\le\frac{\pi}{L_1}$ and $|\theta_2|\le\frac{\pi}{L_2}$. We conclude that
    \[|f(z_1^*, z_2^*)| \ge \frac{1}{(ab)^{L_1 L_2 - 1}} \ge \exp(-L_1 L_2 \log (ab)),\]
    where $z_1^*=\exp(i\theta_1)$ and $z_2^*=\exp(i\theta_2)$ satisfy $|\theta_1|\le \frac{\pi}{L_1}$ and $|\theta_2|\le \frac{\pi}{L_2}$.
\end{proof}

We remark that \cref{littlewood} is a generalization of \cite[Lemma 17]{krishnamurthy2021trace}.

\subsection{Completing the proof}\label{completing proof}
We set the parameters in \cref{littlewood} to be $L_1=L$ for some constant $L$ to be chosen later, $L_2=1$, $a \le \frac{n}{d}$, and $b \le d$. By \cref{littlewood} and the triangle inequality, we can lower bound \cref{diffspiders} as
\begin{equation}\label{2}
    \sum_{(i',j')\in S}|\mathbb{E}[b_{i',j'}^{(1)}]-\mathbb{E}[b_{i',j'}^{(2)}]||w_1^*|^{i'}|w_2^*|^{j'}\ge (1-q)\exp\left(-L\log n\right)
\end{equation}
for some $z_1^*=\exp(i\theta_1)$ and $z_2^*=\exp(i\theta_2)$ such that $|\theta_1|\le \pi/L$ and $|\theta_2|\le \pi$. Recall the change of variables
\[w_1^*=\frac{z_1^*-q^d}{1-q^d}\qquad \text{and} \qquad w_2^*=\frac{z_2^*-q}{1-q}.\]
We can upper bound $|w_1^*|$ as
\begin{align*}
|w_1^*| &=\frac{|z_1^*-q^d|}{1-q^d}
\\&\le\frac{\sqrt{\left(\cos\frac{\pi}{L}-q^d\right)^2+\left(\sin\frac{\pi}{L}\right)^2}}{1-q^d}\\
&=\frac{\sqrt{1-2q^d\cos\frac{\pi}{L}+q^{2d}}}{1-q^d}\\
&=\frac{\sqrt{(1-q^d)^2+2q^d\left(1-\cos\frac{\pi}{L}\right)}}{1-q^d}\\
&=\left(1+\frac{2q^d\left(1-\cos\frac{\pi}{L}\right)}{(1-q^d)^2}\right)^{1/2}\\
&\le\exp\left(\frac{q^d\pi^2}{2L^2(1-q^d)^2}\right),
\end{align*}
where we use the inequalities $(1+x)^r\le e^{rx}$ for $r,x\ge 0$ and $1-\cos\frac{\pi}{L}\le \frac{1}{2}\left(\frac{\pi}{L}\right)^2$. Therefore,
\[|w_1^*|^{\frac{n}{d}}\le\exp\left(\frac{n}{d}\cdot \frac{Cq^d}{L^2(1-q^d)^2}\right)\]
for some constant $C$. We can also upper bound $|w_2^*|$ as
\[|w_2^*|=\frac{|z_2^*-q|}{1-q}\le\frac{1+q}{1-q},\]
so
\[|w_2^*|^d\le \exp(C'd)\]
for some constant $C'$ depending on $q$. Therefore, by \cref{2} and the fact that $|w_1^*|, |w_2^*| \ge 1$, we have
\[\exp\left(\frac{n}{d}\cdot \frac{Cq^d}{L^2(1-q^d)^2}+C'd\right)\sum_{(i',j')\in S}|\mathbb{E}[b_{i',j'}^{(1)}]-\mathbb{E}[b_{i',j'}^{(2)}]|\ge (1-q)\exp(-L\log n).\]
Thus there exists some pair of indices $(I,J)\in S$ such that
\begin{equation}\label{eta}
    |\mathbb{E}[b_{I,J}^{(1)}]-\mathbb{E}[b_{I,J}^{(2)}]|\ge \frac{1-q}{n}\exp\left(-\frac{n}{d}\cdot \frac{Cq^d}{L^2(1-q^d)}-C'd-L\log n\right) =: \eta.
\end{equation}
Denote the right hand side of \cref{eta} by $\eta$.

Returning now to the best-match algorithm, given two candidate spiders $X^{(1)}$ and $X^{(2)}$, we define the \emph{better match} to be $X^{(1)}$ if
\[\left|\frac{1}{T}\sum_{t=1}^Ts_{I,J}^t-\mathbb{E}[b_{I,J}^{(1)}]\right|\le\left|\frac{1}{T}\sum_{t=1}^Ts_{I,J}^t-\mathbb{E}[b_{I,J}^{(2)}]\right|,\]
where $s_{I,J}^t\in \{0, 1\}$ is the value of the node at position $(I, J)$ of the $t$-th trace. Now, suppose $X^{(1)}=X^*$ is the true seed spider. For all possible seed spiders $X^{(2)}$, we can use a Chernoff bound to upper bound the failure probability, namely the probability that $X^{(2)}$ is a better match than $X^{(1)}$, by $\exp(-T\eta^2/2)$, where $T$ is the total number of traces. Therefore, by a union bound, the probability that $X^*$ loses to at least one other spider is at most
\begin{align*}
    \mathbb{P}[X^*\text{ not chosen by algorithm}]&\le \sum_{X^{(2)}\neq X^*}\mathbb{P}[X^{(2)} \text{ better match than }X^*]\\
    &\le 2^n\cdot\exp(-T\eta^2/2)\\
    &\le \exp\left(n\log 2-\frac{T\eta^2}{2}\right).
\end{align*}
For this expression to be at most $\frac{1}{n}=\exp(-\log n)$, we set
\[T = \frac{2}{\eta^2}(n\log 2+\log n) = \Theta(\eta^{-2} n).\]
Plugging in the definition of $\eta$ from \cref{eta} yields
\begin{equation}\label{T}
    T = \Theta\left(n^3\cdot \exp\left(\frac{n}{d}\cdot \frac{Cq^d}{L^2(1-q^d)}+C'd+cL\log n\right)\right).
\end{equation}
Note that the $n^3$ term is negligible. The $C'd$ term is also negligible since we are in the regime $d\le \log_{1/q}(n)$. Finally, $1-q^d\ge 1-q$ depends only on $q$, so \cref{T} can be simplified to
\[T= \exp\left(\Theta\left(\frac{nq^d}{dL^2}+L\log n\right)\right).\]
To balance these terms, we set $L=\left(\frac{nq^d}{d\log n}\right)^{1/3}$ to get a final bound of
\[T = \exp\left(C\cdot \frac{(nq^d)^{1/3}}{d^{1/3}}(\log n)^{2/3}\right),\]
where $C$ is a constant that depends only on $q$. We conclude the proof of \cref{main result}.

\section{Conclusion}

We presented a mean-based algorithm using Littlewood polynomials that reconstructs $(n,d)$-spiders with high probability in the regime $d \le \log_{1/q}(n)$, where $q$ is the deletion probability. Our algorithm uses $\exp\left(\mathcal{O}\left( \frac{(nq^d)^{1/3}}{d^{1/3}}(\log n)^{2/3}\right)\right)$ traces and works for the full range $q\in (0,1)$ of deletion probabilities.

In light of recent work improving the string trace reconstruction upper bound to $\exp(\tilde{\mathcal{O}}(n^{1/5}))$ using a non-mean-based algorithm \cite{chase2020new}, it would be interesting to see whether a similar technique could achieve an upper bound of the form $\exp\left(\tilde{\mathcal{O}}((nq^d)^{1/5})\right)$ for the spider trace reconstruction problem.


\vfill

\pagebreak

\pagestyle{plain}
\bibliographystyle{plain}
\bibliography{paper}

\end{document}